\newcommand{\F}{\mathbb{F}}
\newcommand{\calC}{\mathcal{C}}
\newcommand{\calV}{\mathcal{V}}
\newcommand{\ov}[1]{\overline{#1}}
\newcommand{\Vcond}{\calV\cap\langle e_i,e_j\rangle=\{0\}\ \forall i,j\in[\mu]}
\newtheorem{theorem}{Theorem}[section]
\newtheorem{lemma}[theorem]{Lemma}
\newtheorem{corollary}[theorem]{Corollary}
\newtheorem{definition}[theorem]{Definition}
\crefname{lemma}{Lemma}{Lemmas}
\title{Batch Codes from Affine Cartesian Codes and Quotient Spaces\thanks{\textbf{The research was partially supported by the National Science Foundation under grant DMS-1547399.}}}
\author{Baumbaugh, Travis Alan \and Colgate Haley \and Jackman, Timothy \and Manganiello, Felice}
\date{May 2020}
\begin{document}

\maketitle

\begin{abstract}
    Affine Cartesian codes are defined by evaluating multivariate polynomials at a cartesian product of finite subsets of a finite field. In this work we examine properties of these codes as batch codes. We consider the recovery sets to be defined by points aligned on a specific direction and the buckets to be derived from cosets of a subspace of the ambient space of the evaluation points. We are able to prove that under these conditions, an affine Cartesian code is able to satisfy a query of size up to one more than the dimension of the space of the ambient space. \end{abstract}

\section{Introduction}

Batch codes may be used in information retrieval when multiple users want to access potentially overlapping requests from a set of devices while achieving a balance between minimizing the load on each device and minimizing the number of devices used. We can view the buckets as servers and the symbols used from each bucket as the load on each server. In the original scenario, a single user is trying to reconstruct $t$ bits of information. This definition naturally generalizes to the concept of multiset batch codes which have nearly the same definition, but where the indices chosen for reconstruction are not necessarily distinct.

The family of codes known as batch codes was introduced in \cite{Ishai04}. They were originally studied as a scheme for distributing data across multiple devices and minimizing the load on each device and total amount of storage consumed. In this paper, we study $[n,k,t,m,\tau]$ batch codes, where $n$ is the code length, $k$ is the dimension of the code, $t$ is the number of entries we wish to retrieve, $m$ is the number of buckets, and $\tau$ is the maximum number of symbols used from each bucket for any reconstruction of $t$ entries. We seek to minimize the load on each device while maximizing the amount of reconstructed data. That is, we want to minimize $\tau$ while maximizing $t$.

This corresponds to $t$ users who each wish to reconstruct a single element, among which there may be duplicates. This is similar to Private information retrieval (PIR) codes, which differ in that $t$ duplicates of the same element must be reconstructed. Other schemes dealing with multiple requests are addressed in \cite{Ramakrishnan18}. For batch and PIR codes where the queries do not all necessarily occur at the same time, see \cite{Riet18}. Restricted recovery set sizes are considered in \cite{Thomas17}. Another notable type of batch code defined in \cite{Ishai04} is a primitive multiset batch code where the number of buckets is $m=n$.

Much of the related research involves primitive multiset batch codes with a systematic generator matrix. In \cite{Ishai04}, the authors give results for some multiset batch codes using subcube codes and Reed-Muller codes. They use a systematic generator matrix, which often allows for better parameters. Their goal was to maximize the efficiency of the code for a fixed number of queries $t$. The focus of research on batch codes then shifted to combinatorial batch codes. These were first introduced by \cite{paterson09}. They are replication-based codes using various combinatorial objects that allow for efficient decoding procedures. We do not consider combinatorial batch codes, but some relevant results can be found in \cite{paterson09}, \cite{Bujtas11}, \cite{Bhattacharya12}, and \cite{Silberstein16}.

In order to reduce wait time for multiple users, we may look at locally repairable codes with availability as noted in \cite{Dimakis11}. A locally repairable code, with locality $r$ and availability $\delta$, provides us the opportunity to reconstruct a particular bit of data using $\delta$ disjoint sets of size at most $r$ \cite{Skachek18}. When we only need to reconstruct this one bit multiple times, this gives us properties of the code as a private Information Retrieval (PIR) code. However, the research in this paper covers the scenario in which some bits may differ.

The Hamming weights of affine Cartesian codes are studied in \cite{Beelen18}. This is a generalization of work in \cite{Heijnen98}, and in a similar fashion, the work in this paper aims to expand the study of batch properties from Reed-Muller codes as studied in \cite{prevSummer} to the broader class of affine Cartesian codes. In the same manner, we begin by examining codes with $\tau=1$. The even broader family of generalized affine Cartesian codes, specifically those with complementary duals, are studied in \cite{Lopez19}.

This work focuses on studying the properties of affine Cartesian codes as batch codes. In Section \ref{batch_background}, we formally introduce batch codes and affine Cartesian codes. 
In Section \ref{cartRecovery},  we define the special recovery sets for affine Cartesian codes based on the points in the direction of a coordinate. The main body is Section \ref{quotientBucket}; there we define the building blocks of a batch code - the buckets. In this work we suggest the buckets to be cosets of a subset $V$ of $\F_q^\mu$. Under several equivalent conditions, we show in Theorem \ref{Expanding} that a maximal length affine Cartesian code can satisfy queries of size up to $t=n+1$. The specific case with $V=\langle(1,1,\dots,1)\rangle$ is considered in Subsection \ref{diagSubspace}. We conclude by generalizing the result to any affine Cartesian code.

\section{Background}\label{batch_background}

Batch codes were introduced in \cite{Ishai04}.
Throughout this work, by batch codes we refer specifically to multiset batch codes, defined by \cite{Ishai04}. To build up to this definition, we first introduce several notions. For a given linear code $\calC\subseteq \F_q^n$, where $\F_q$ represents the finite field of $q$ elements, and the index set $[n]:=\{1,\dots, n\}$ we give the following definitions.

\begin{definition} A {\em bucket configuration} $B_1,\dots,B_m$ is a partition on index set $[n]$. For each $k\in[m]$, the $B_k$ is referred to as a {\em bucket}.
\end{definition}

\begin{definition}\label{recoverySet}
	For any index $i\in [n]$, a \em{recovery set} $R_i$ for the index $i$ is a set such that, for any codeword $c\in\calC$, the value of $c_i$ may be recovered by reading the symbols $\{c_j\mid j\in R_i\}$.
\end{definition}
At this point, we note two distinct categories of recovery sets. If $R_i=\{i\}$, then we refer to $R_i$ as {\em direct access}. If instead $i\notin R_i$, then we refer to $R_i$ as an {\em indirect} recovery set. We also note that while any set containing a recovery set is technically a recovery set, these shall not be considered proper recovery sets in the remainder of the paper. Now we deal with multiple recovery sets at the same time for a query of indices that are not necessarily distinct. 

\begin{definition}\label{SolutionSet}
	Given a query $Q=(i_1,\dots,i_t)\in [n]^t$, we say that a set of recovery sets $R^Q=\{R_{i_1},\dots,R_{i_t}\}$ is a {\em query recovery set} with property $\tau$ for $Q$ if
	\begin{enumerate}
		\item $\left|\left(\bigcup_{s=1}^{t}R_{i_s}\right)\cap B_k\right|\le\tau$ $\forall$ $k \in [m]$, and
		\item $R_{i_r}\cap R_{i_s}=\emptyset$ $\forall$ $r,s\in[t]$ where $r\ne s$.
	\end{enumerate}
\end{definition}

\begin{definition}\label{ValidBatchGenerated}
	We say that a bucket configuration $B_1,\dots, B_m$ is {\em $t,\tau$ valid} if, for all queries $Q=(i_1, \dots,i_t)\in [n]^t$, there exists a query recovery set $R^Q$ with property $\tau$.
\end{definition}

Now, with the building blocks in place, we may more rigorously define batch codes.

\begin{definition}
	A $[n,k, t, m, \tau]$ linear {\em batch code} $\calC$ over $\F_q$ is a linear code $\calC$ of length $n$ and dimension $k$, together with a $t,\tau$ valid bucket configuration $B_1,\dots,B_m$.
\end{definition}

Throughout this work, we will focus on the case  $\tau=1$. The following lemma, proven in \cite{Ishai04}, allows us to do this.

\begin{lemma}\label{lemucla2}
	Any $[n,k,t,m,1]$ batch code is also an $[n,k,t,\lceil\frac{m}{\tau}\rceil,\tau]$ batch code.
\end{lemma}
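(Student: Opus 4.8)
The plan is to obtain the coarser code by merging the original $m$ buckets into groups, and then to argue that any query recovery set that works for the original configuration with property $1$ automatically works for the merged configuration with property $\tau$. First I would partition the index set $[m]$ of the original buckets into $\lceil m/\tau\rceil$ groups $G_1,\dots,G_{\lceil m/\tau\rceil}$, each of size at most $\tau$, for instance via consecutive blocks $G_\ell=\{(\ell-1)\tau+1,\dots,\min(\ell\tau,m)\}$. I would then define the new buckets by $B'_\ell=\bigcup_{k\in G_\ell}B_k$. Since the original buckets $B_1,\dots,B_m$ partition $[n]$ and the $G_\ell$ partition $[m]$, the sets $B'_1,\dots,B'_{\lceil m/\tau\rceil}$ again partition $[n]$, so they form a valid bucket configuration with $\lceil m/\tau\rceil$ buckets.

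Next I would fix an arbitrary query $Q=(i_1,\dots,i_t)\in[n]^t$. Because $\calC$ with the original buckets is an $[n,k,t,m,1]$ batch code, that configuration is $t,1$ valid, so Definition \ref{SolutionSet} supplies a query recovery set $R^Q=\{R_{i_1},\dots,R_{i_t}\}$ with property $1$. The claim is that this very same $R^Q$ is a query recovery set with property $\tau$ for the new configuration. The disjointness requirement (condition~2 of Definition \ref{SolutionSet}) mentions only the recovery sets and not the buckets, so it carries over verbatim; the only thing to re-check is condition~1 against the coarser buckets.

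To verify condition~1, I would write $U=\bigcup_{s=1}^t R_{i_s}$ and use that each $B'_\ell$ is a disjoint union of at most $\tau$ of the original buckets, giving
\begin{equation*}
\left|U\cap B'_\ell\right|=\sum_{k\in G_\ell}\left|U\cap B_k\right|\le\sum_{k\in G_\ell}1=|G_\ell|\le\tau,
\end{equation*}
where the middle inequality is exactly property $1$ of the original configuration. Hence $R^Q$ has property $\tau$ for the merged buckets, and since $Q$ was arbitrary the merged configuration is $t,\tau$ valid, so $\calC$ together with $B'_1,\dots,B'_{\lceil m/\tau\rceil}$ is an $[n,k,t,\lceil m/\tau\rceil,\tau]$ batch code. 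I do not expect a genuine obstacle here: the argument is essentially a counting bookkeeping, and the single point that needs care is the first equality in the display, which relies on the original buckets being pairwise disjoint — guaranteed because they form a partition of $[n]$.
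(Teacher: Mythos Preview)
Your argument is correct: merging the original buckets into $\lceil m/\tau\rceil$ groups of size at most $\tau$ and reusing the same query recovery sets is exactly the right construction, and your bookkeeping for condition~1 is sound. Note that the paper itself does not prove this lemma but simply attributes it to \cite{Ishai04}; your proof is the standard one and there is nothing further to compare.
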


An affine Cartesian code is defined as follows.

\begin{definition}
	Let $\F_q$ be an arbitrary field, and $A_1, \ldots,
        A_{\mu}\subseteq \F_1$ be non-empty subsets. Define $X$ to be
        the cartesian product $A_1 \times \ldots \times A_{\mu}\subseteq\F_q^{\mu}$. Let $S=\F_q[x_1, \ldots, x_{\mu}]$ be a multivariate polynomial ring and $S^{\le\rho}$ be the subspace of $S$ of all polynomials  with total degree at most $\rho$. Let $X=\{p_1, \ldots ,p_n\}$. The affine Cartesian code $\calC_{X}(\rho)$ of degree $\rho$ is  
    \[\calC_{X}(\rho)=\left\{(f(p_1),\dots,f(p_n))\mid f\in S^{\le \rho} \right\},\]	
	meaning the image of the evaluation map: 
	\begin{align*}
	ev_{\rho}: S^{\le\rho}&\rightarrow\F_q^n\\
	f&\mapsto (f(p_1), \ldots, f(p_n)).
	\end{align*}
	
\end{definition}

The affine Cartesian code $\calC_{X}(\rho)$ is an $[n,\rho+1]$-linear code. The distance of such codes is studied in  \cite{Lopez14}. Note that affine Cartesian codes are a generalization of Reed-Muller codes since $\calC_{\F_q^\mu}(\rho)$ is a Reed-Muller code. For these codes, the index set of the code corresponds $X$. This is important in the construction of batch codes based on affine Cartesian codes as the entries of the codewords correspond to the evaluations polynomials in the points of $X$. 

In the following section, we investigate the recovery sets for an affine Cartesian code that arise from using the structure of the set $X$.

\section{Recovery Sets}\label{cartRecovery}

This section focuses on the characterization of some recovery sets for $\calC_{X}(\rho)$.
It is well known that a univariate polynomial of degree $\rho-1$ is uniquely determined by $\rho$ of its evaluations. Furthermore, one can find the polynomial starting from the evaluations by using Lagrange interpolation. The following holds.





\begin{lemma}\label{recoverable}
Let $X=A_1\times\dots \times A_\mu$. For  $p=(a_1,\dots,a_{\mu})\in X$, and $i\in[\mu]$, let \[R_{p,i}=\left\{(b_1,\dots,b_\mu)\mid b_i\in A_i, \ b_j=a_j \ \mbox{if} \ j\neq i\right\} \setminus\{p\}.\]
If $\rho+1<|A_i|$, then for any $f\in S^{\le\rho}$, the value of $f(p)$ can be recovered using the values $f(R_{p,i})$.
\end{lemma}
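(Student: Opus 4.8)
The plan is to reduce this multivariate recovery problem to ordinary univariate Lagrange interpolation along the line through $p$ in the direction of the $i$-th coordinate. Given $f\in S^{\le\rho}$, I would substitute the fixed values $x_j=a_j$ for every $j\neq i$ and view the result as a single-variable polynomial
\[
g(x_i):=f(a_1,\dots,a_{i-1},x_i,a_{i+1},\dots,a_\mu)\in\F_q[x_i].
\]
Because $f$ has total degree at most $\rho$, every monomial of $f$ has exponent in $x_i$ at most $\rho$, so substituting constants for the other variables yields $\deg g\le\rho$. Two observations connect $g$ to the data at hand: first, $f(p)=g(a_i)$; and second, the evaluations $f(R_{p,i})$ are exactly the values $g(b_i)$ as $b_i$ ranges over $A_i\setminus\{a_i\}$, since the points of $R_{p,i}$ are precisely those that agree with $p$ off the $i$-th coordinate while letting the $i$-th coordinate run through $A_i\setminus\{a_i\}$.

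The key step is then a counting argument. The set $A_i\setminus\{a_i\}$ consists of $|A_i|-1$ distinct field elements, and the hypothesis $\rho+1<|A_i|$ is equivalent to $|A_i|-1\ge\rho+1$. Hence $f(R_{p,i})$ furnishes the value of $g$ at at least $\rho+1$ distinct points. Since a univariate polynomial of degree at most $\rho$ is uniquely determined by any $\rho+1$ of its evaluations, I would invoke Lagrange interpolation on $\rho+1$ of these points to reconstruct $g$ exactly, and then recover $f(p)$ by evaluating the interpolated polynomial at the remaining point, $f(p)=g(a_i)$.

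I do not anticipate a genuine obstacle, as this is essentially the univariate interpolation fact already noted before the statement, applied to the restriction of $f$. The only points requiring care are verifying that restricting the total degree to the single variable $x_i$ does not raise the degree (which is immediate from the definition of total degree) and confirming that $a_i\notin A_i\setminus\{a_i\}$, so that $g(a_i)$ is genuinely new information rather than one of the queried evaluations. Both are straightforward, so the real content of the lemma is simply the inequality $\rho+1<|A_i|$ guaranteeing enough distinct sample points.
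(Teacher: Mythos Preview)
Your proposal is correct and follows essentially the same approach as the paper: both restrict $f$ to the univariate polynomial in $x_i$ obtained by fixing the other coordinates, observe that its degree is at most $\rho$, identify $f(R_{p,i})$ with the evaluations of this polynomial on $A_i\setminus\{a_i\}$, and then apply Lagrange interpolation using the hypothesis $\rho+1<|A_i|$ to recover the polynomial and hence $f(p)$. The only cosmetic difference is that the paper interpolates using all $|A_i|-1$ points (bounding the degree by $|A_i|-2$), whereas you explicitly note that $\rho+1$ of them already suffice.
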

\begin{proof}
	Let $f\in S^{\le\rho}$, where $\rho+1\le|A_i|$. By evaluating the multivariate polynomial $f$ in all but the $i$th of the coordinates of $p$, we obtain the univariate polynomial     $f_i(x_i)=f(a_1,\dots,a_{i-1},x_i,a_{i+1},\dots,a_{\mu})\in\F_q[x_i]$. By the construction of $R_{p,i}$, we have that $f(R_{p,i})=f_i(A_i\setminus\{a_i\})$. Since $f$ is a polynomial of total degree at most $\rho$, $f_i(x_i)$ is a polynomial of degree at most $\rho$ in $x_i$. If $\rho+1<|A_i|$, then $\rho\le|A_i|-2$, so $f_i$ is of degree at most $|A_i|-2$. By Lagrange interpolation, we may find a unique polynomial $g(x_i)\in\F_q[x_i]$ of degree at most $|A_i\setminus\{a_i\}|-1=|A_i|-2$ such that $g(a)=f_i(a)$ for all $a\in A_i\setminus\{a_i\}$, and so we must have $g=f_i$. We find that $g(a_i)=f_i(a_i)=f(a_1,\dots,a_{\mu})=f(p)$, and so we can recover $f(p)$.
\end{proof}

We already recalled that affine Cartesian codes are a generalization of Reed-Muller codes. For simplicity of notation, the remainder of this section and the next section focus on Reed-Muller codes. It is only at the end of the paper that we generalize the results to affine Cartesian codes. We thus initially consider  $\calC_{\F_q^\mu}(\rho)$ and $\rho<q-1$. 

From to the previous lemma, we obtain the following characterization of some recovery sets for $\calC_{\F_q^\mu}(\rho)$.

\begin{corollary}\label{qRecovery}
    For a $p\in \F_q^\mu$, the sets $R_{p,i}=(p+\langle e_i\rangle) \setminus \{p\}$ for $i\in[\mu]$ are recovery sets for $p$ in $\calC_{\F_q^\mu}(\rho)$.
\end{corollary}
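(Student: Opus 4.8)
The plan is to recognize this corollary as the special case of \Cref{recoverable} in which the evaluation set fills the whole ambient space. For the Reed--Muller code $\calC_{\F_q^\mu}(\rho)$ we have $X=\F_q^\mu$, so every factor is $A_i=\F_q$ and in particular $|A_i|=q$ for each $i\in[\mu]$. Since we are operating under the standing assumption $\rho<q-1$, we obtain $\rho+1<q=|A_i|$, which is exactly the inequality required to invoke \Cref{recoverable}.

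Next I would verify that the coset description of the recovery set matches the coordinate-wise description appearing in the lemma. Writing $p=(a_1,\dots,a_\mu)$, the coset $p+\langle e_i\rangle$ consists of the points $p+ce_i$ as $c$ ranges over $\F_q$; since translation by $c$ permutes $\F_q$, the $i$th coordinate $a_i+c$ runs through all of $\F_q$ while every other coordinate stays fixed at $a_j$. Hence
\[
p+\langle e_i\rangle=\left\{(b_1,\dots,b_\mu)\mid b_i\in\F_q,\ b_j=a_j\ \mbox{if}\ j\neq i\right\},
\]
and deleting $p$ itself yields precisely the set $R_{p,i}$ defined in \Cref{recoverable} with $A_i=\F_q$.

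With these two observations in place the conclusion is immediate: \Cref{recoverable} guarantees that for every $f\in S^{\le\rho}$ the value $f(p)$ can be recovered from the values $f(R_{p,i})$, which by \Cref{recoverySet} is exactly the statement that $R_{p,i}$ is a recovery set for $p$ in $\calC_{\F_q^\mu}(\rho)$. As $i\in[\mu]$ was arbitrary, all $\mu$ of these sets qualify. There is no genuine obstacle here; the only points demanding care are the bookkeeping that translates the group-coset notation $(p+\langle e_i\rangle)\setminus\{p\}$ into the coordinate-wise form used in the lemma, and the observation that the \emph{strict} inequality $\rho<q-1$ is what supplies the \emph{strict} inequality $\rho+1<|A_i|$ that \Cref{recoverable} demands.
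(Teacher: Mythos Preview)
Your proposal is correct and matches the paper's approach exactly: the paper states this corollary without proof, deriving it directly from \Cref{recoverable} under the standing assumption $\rho<q-1$ (so that $\rho+1<|A_i|=q$), and the coset description $(p+\langle e_i\rangle)\setminus\{p\}$ is precisely the set $R_{p,i}$ of that lemma when $A_i=\F_q$.
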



As in Section \ref{batch_background}, we refer to $R_{p, i}$ for $i \in [\mu]$ as an indirect recovery of $p$. These sets are in correspondence with one-dimensional affine spaces of $\F_q^{\mu}$, where $i$ corresponds to the only index that has varying entries. The direct access of $p$ is represented by $R_{p, 0}=\{p\}$. 

\begin{corollary}
	For any query $Q=(p_1,\dots,p_{\mu+1})\in(\F_q^{\mu})^{\mu+1}$, using the indices in a query recovery set $R^Q=\{R_{p_1,i_1},\dots,R_{p_{\mu+1},i_{\mu+1}}\}$, it is possible to recover $f(p_1),\dots,f(p_{\mu+1})$, where $f\in S^{\le \rho}$. 
\end{corollary}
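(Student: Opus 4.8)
The plan is to reduce the statement to the single-index recovery already established, treating the query recovery set $R^Q=\{R_{p_1,i_1},\dots,R_{p_{\mu+1},i_{\mu+1}}\}$ as given and checking that each of its members genuinely recovers the corresponding evaluation. First I would observe that every member is of one of two types. If $i_s\in[\mu]$, then $R_{p_s,i_s}=(p_s+\langle e_{i_s}\rangle)\setminus\{p_s\}$ is a recovery set for $p_s$ by Corollary~\ref{qRecovery}; this is where the standing hypothesis $\rho<q-1$ is used, as it is exactly the condition $\rho+1<|A_{i_s}|=q$ needed to invoke Lemma~\ref{recoverable} along the line through $p_s$ in direction $e_{i_s}$. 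If instead $i_s=0$, then $R_{p_s,0}=\{p_s\}$ is direct access and returns $f(p_s)$ immediately. In either case Definition~\ref{recoverySet} tells us that the symbols $\{f(x):x\in R_{p_s,i_s}\}$ determine $f(p_s)$: the restriction of $f$ to the relevant line is univariate of degree at most $\rho\le q-2$, so Lagrange interpolation reconstructs it from its values on $R_{p_s,i_s}$ and evaluating at $p_s$ yields $f(p_s)$.

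Performing this reconstruction for each $s\in[\mu+1]$ recovers the full list $f(p_1),\dots,f(p_{\mu+1})$, which is the assertion. Here the two defining conditions of a query recovery set (Definition~\ref{SolutionSet}) play a supporting rather than an essential role in recoverability: the pairwise disjointness of the $R_{p_s,i_s}$ ensures that the $\mu+1$ reconstructions read disjoint collections of symbols, so together they form one legitimate batch operation. I would make explicit that disjointness is not needed to recover any single $f(p_s)$, but is precisely what links the corollary to the batch-code framework of Section~\ref{batch_background}.

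The step I expect to be the real obstacle is not this recovery but the existence of a query recovery set of the prescribed shape for an arbitrary query of size $\mu+1$, i.e.\ choosing directions $i_1,\dots,i_{\mu+1}\in\{0,1,\dots,\mu\}$ that make the $R_{p_s,i_s}$ pairwise disjoint. When the $p_s$ are distinct one may take all direct access; when they coincide one uses that a single point $p$ admits the $\mu+1$ pairwise-disjoint recovery sets $\{p\},R_{p,1},\dots,R_{p,\mu}$, since two lines through $p$ in distinct directions meet only at the deleted point $p$. The hard case is the mixed one, where repeated points must avoid both one another and nearby points; because $R_{p,i}$ and $R_{p',j}$ with $i\neq j$ intersect exactly when $p$ and $p'$ agree outside coordinates $\{i,j\}$ and differ in both, the assignment must be chosen globally. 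If the corollary is read as asserting this existence, I would group the query into clusters of equal points, assign each cluster a set of distinct directions, and verify cross-cluster compatibility through this coordinate-difference criterion — the combinatorial core that the main theorem is set up to handle.
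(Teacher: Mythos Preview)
Your first paragraph is exactly the paper's proof: split on whether $i_s=0$ (direct access) or $i_s\in[\mu]$ (invoke Lemma~\ref{recoverable} via Corollary~\ref{qRecovery}, using $\rho<q-1$). The corollary takes $R^Q$ as \emph{given}, so your third paragraph's concern about existence of a suitable query recovery set is not part of this statement---that is indeed the hard work, but it is deferred to Section~\ref{quotientBucket}.
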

\begin{proof}
	For any $s\in[\mu+1]$ such that $i_s=0$, we note that $R_{p_s,i_s}=R_{p_s,0}=\{p_s\}$, and so this is direct access, and we may simply calculate $f(p_s)$. That these are recovery sets for $p_s$ such that $i_s\ne0$ follows from Lemma \ref{recoverable}, noting that with $X=\F_q^\mu$, the two definitions of $R_{p_j,i_j}$ coincide.
\end{proof}

Note that for the previous corollary $\tau$ is not necessary equal to 1.

For the rest of the work we will consider query recovery sets consisting only of $R_{p,i}$ recovery sets for $i=0,\dots,\mu$.  We will leave off the $Q$ in $R^Q$ when the context makes the query unambiguous. To be more precise about batch properties, we restate the conditions that every query recovery set must satisfy for a bucket configuration to be valid with $t=\mu+1$ and $\tau=1$, the parameters we will be using in the following section:
\begin{align}
    & \left|\left(\bigcup_{s=1}^{\mu+1}R_{p_s,i_s}\right)\cap B_k\right|\le1 &&\forall k\le m, \label{intersection}\\
	& R_{p_r,i_r}\cap R_{p_s,i_s}=\emptyset& &\forall r,s\in[\mu+1], \ r\ne s. \label{empty}
\end{align}

The first condition corresponds to using at most $\tau=1$ indices in any given bucket, while the second corresponds to having non-overlapping recovery sets. 

\section{Quotient-Space Bucket configuration}\label{quotientBucket}

With requirements for valid bucket configurations addressed, we now define the bucket configuration used in this paper.
\begin{definition}\label{quotient-buckets}
	For any subspace $\calV$ of $\F_q^{\mu}$, consider the quotient space $\F_q^{\mu}/\calV$. The equivalence classes $[p]=p+\calV$ partition $\F_q^{\mu}$. We define a quotient-space bucket configuration to be one where the buckets are these equivalence classes.
\end{definition}

Note that although the notation $[\cdot]$ is used both for equivalence classes and index sets, its use will be clear from context. 

\begin{definition}
	We denote by $\sim$ the equivalence relation on $\F_q^\mu$ induced by $\calV$, that is, $p_1\sim p_2$ if and only if $p_1-p_2\in \calV$.
\end{definition}

With this bucket configuration, we have $m=q^{\mu-\dim{\calV}}$. This configuration provides us with a great deal of symmetry and structure, which allows us to approach determining the validity of a given quotient-space bucket configuration with the following tools.

For any $p\in\F_q^{\mu}$, the set $[p]=p+\calV$ is all elements in the same bucket as $p$ by definition. 
	For any subset $U\subseteq\F_q^n$, 
	$[U]=\{[p]\mid p\in U\}$ is the set of all buckets corresponding to points in $U$.  With this notation, we now note an important result with respect to recovery sets for equivalent points.

\begin{lemma}\label{CollapsibleBucket}
	With the subspace construction, if $p_1\sim p_2$, then  $[R_{p_1,i}]=[R_{p_2,i}]$ for all $i\in[\mu]$.
\end{lemma}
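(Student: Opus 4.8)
The plan is to reduce the statement to the elementary observation that the equivalence relation $\sim$ induced by $\calV$ is invariant under translation. First I would write each recovery set explicitly. Recalling from Corollary \ref{qRecovery} that $R_{p,i}=(p+\langle e_i\rangle)\setminus\{p\}$, I would parametrize
\[
R_{p,i}=\{p+ce_i\mid c\in\F_q\setminus\{0\}\},
\]
so that applying the bucket map gives $[R_{p,i}]=\{[p+ce_i]\mid c\in\F_q\setminus\{0\}\}$.

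The key step is to verify that translation by any fixed vector preserves $\sim$. Since $p_1\sim p_2$ means $p_1-p_2\in\calV$, and $(p_1+w)-(p_2+w)=p_1-p_2\in\calV$ for any $w\in\F_q^{\mu}$, we obtain $p_1+w\sim p_2+w$, i.e.\ $[p_1+w]=[p_2+w]$. Taking $w=ce_i$ for each $c\in\F_q\setminus\{0\}$ then yields $[p_1+ce_i]=[p_2+ce_i]$ for every such $c$.

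Finally I would assemble these equalities over all $c\neq 0$. Because the parameter $c$ ranges over the same set $\F_q\setminus\{0\}$ in both $R_{p_1,i}$ and $R_{p_2,i}$ (the excluded base point corresponds in each case to $c=0$), the term-by-term equality of buckets gives
\[
[R_{p_1,i}]=\{[p_1+ce_i]\mid c\neq0\}=\{[p_2+ce_i]\mid c\neq0\}=[R_{p_2,i}],
\]
as required. I do not anticipate a genuine obstacle here; the only point requiring a little care is confirming that the removal of the base point is consistent across the two recovery sets, so that the index set of the parametrization matches on both sides. Translation invariance guarantees this automatically, since it sends the excluded point $p_1$ of $R_{p_1,i}$ exactly to the excluded point $p_2$ of $R_{p_2,i}$.
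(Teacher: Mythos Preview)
Your proof is correct and follows essentially the same approach as the paper: parametrize $R_{p,i}$ as $\{p+\alpha e_i\mid \alpha\neq 0\}$, observe that $(p_1+\alpha e_i)-(p_2+\alpha e_i)=p_1-p_2\in\calV$, and conclude the bucket sets coincide term by term. The paper's argument is slightly terser but identical in substance.
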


\begin{proof}
	For all $i\in[\mu]$ and any $\alpha\in\F_q$, $(p_1+\alpha e_i)-(p_2+\alpha e_i)=p_1-p_2\in \calV$, so $[p_1+\alpha e_i]=[p_2+\alpha e_i]$. Thus we may write \[[R_{p_1,i}]=\{[p_1+\alpha e_i]\mid\alpha\in\F_q\setminus\{0\}\}=\{[p_2+\alpha e_i]\mid\alpha\in\F_q\setminus\{0\}\}=[R_{p_2,i}].\]
\end{proof}

This means that under the equivalence relation, the recovery sets for elements in the same bucket are the same. This identical use of buckets for the recovery sets leads to the following:

\begin{corollary}\label{queryEquiv}
	Let $Q=(p_1,\dots,p_{\mu+1})\in(\F_q^{\mu})^{\mu+1}$ be a query such that $p_{i_1}\sim p_{i_2}$ for some $i_1\neq i_2$. Let $Q'=(p_1',\dots,p_{\mu+1}')$ be a query where $p_{i_2}'=p_{i_1}$ and $p_i'=p_i$ for $i\neq i_2$. Then $R$ is a query recovery set for $Q'$ if and only if it is a query recovery set for $Q$.
\end{corollary}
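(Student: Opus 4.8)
The plan is to read a query recovery set as a choice of one recovery direction $d_s\in\{0,1,\dots,\mu\}$ for each position $s\in[\mu+1]$, so that for $Q$ the associated recovery sets are $R_{p_s,d_s}$ and for $Q'$ they are $R_{p_s',d_s}$. Since $Q$ and $Q'$ agree outside position $i_2$, these two collections of recovery sets are identical except that position $i_2$ contributes $R_{p_{i_2},d_{i_2}}$ in the first case and $R_{p_{i_1},d_{i_2}}$ in the second. Only conditions \eqref{intersection} and \eqref{empty} have to be verified, and the whole argument reduces to showing that each of them survives replacing $R_{p_{i_2},d_{i_2}}$ by $R_{p_{i_1},d_{i_2}}$. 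I would prove both implications at once by showing that each condition is invariant under this single substitution.

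The key input is a refinement of \cref{CollapsibleBucket} that keeps track of multiplicities. First I would record that, because $p_{i_1}\sim p_{i_2}$, the proof of \cref{CollapsibleBucket} gives $[p_{i_1}+\alpha e_d]=[p_{i_2}+\alpha e_d]$ for every $\alpha\in\F_q$ and every direction $d$; since distinct values of $\alpha$ give distinct points on a line, this yields the counting identity $|R_{p_{i_1},d}\cap B_k|=|R_{p_{i_2},d}\cap B_k|$ for every bucket $B_k$ (the case $d=0$ being immediate from $[p_{i_1}]=[p_{i_2}]$). In words, the substituted recovery set occupies every bucket exactly as many times as the one it replaces.

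With this in hand I would treat \eqref{intersection} first. Writing $R^{(s)}$ for the recovery set at position $s$ and assuming \eqref{empty}, the recovery sets are pairwise disjoint, so for every bucket $B_k$ one has $|(\bigcup_s R^{(s)})\cap B_k|=\sum_s|R^{(s)}\cap B_k|$; the counting identity shows this sum is unchanged by the substitution, so \eqref{intersection} holds for $Q$ if and only if it holds for $Q'$. The subtle step, and the one I expect to be the main obstacle, is \eqref{empty}: the substituted set $R_{p_{i_1},d_{i_2}}$ is genuinely a different index set from $R_{p_{i_2},d_{i_2}}$, so disjointness from the other recovery sets is not inherited directly. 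The resolution is that disjointness is forced by \eqref{intersection} rather than by the original disjointness: once \eqref{intersection} is known for the substituted collection, any common index of two recovery sets would lie in a single bucket and make $|(\bigcup_s R^{(s)})\cap B_k|\ge 2$, a contradiction. Hence \eqref{empty} holds for the substituted collection exactly when \eqref{intersection} does, and combining the two equivalences gives that $R$ is a query recovery set for $Q'$ if and only if it is one for $Q$.
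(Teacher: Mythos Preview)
Your approach is aligned with the paper's: the corollary is stated there without proof, immediately after \cref{CollapsibleBucket}, with only the remark that ``the recovery sets for elements in the same bucket are the same.'' You are supplying the details, and your refined counting identity $|R_{p_{i_1},d}\cap B_k|=|R_{p_{i_2},d}\cap B_k|$, obtained from the bucket-preserving bijection $p_{i_1}+\alpha e_d\mapsto p_{i_2}+\alpha e_d$, is exactly the right strengthening of \cref{CollapsibleBucket} to make the argument go through.

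There is, however, a genuine slip in your last paragraph. You claim that ``disjointness is forced by \eqref{intersection}'': a common index of two recovery sets would lie in a single bucket and make $\left|\bigl(\bigcup_s R^{(s)}\bigr)\cap B_k\right|\ge 2$. This is false as written: a point shared by two of the $R^{(s)}$ is still a single point of the union, so it contributes only $1$ to that cardinality, and \eqref{intersection} is not contradicted. Condition \eqref{intersection} alone does \emph{not} imply \eqref{empty}. What you should invoke instead is the stronger bound you already established on the \emph{sum}: starting from \eqref{intersection} and \eqref{empty} on one side you obtained $\sum_s|R^{(s)}\cap B_k|\le 1$, and the counting identity transfers this sum bound to the substituted collection. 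A common point $x\in R^{(r)}\cap R^{(s)}$ with $r\ne s$ would then force $\sum_s|R^{(s)}\cap B_k|\ge 2$ in the bucket $B_k$ containing $x$, contradicting the sum bound (not \eqref{intersection}). With this correction your argument is complete: the equivalence ``\eqref{intersection} and \eqref{empty} hold'' $\Longleftrightarrow$ ``$\sum_s|R^{(s)}\cap B_k|\le 1$ for all $k$'' is what actually passes back and forth under the substitution.
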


In other words, we may effectively treat recovering multiple equivalent points in the same bucket as recovering the same point multiple times. This leads naturally to notation for all recovery sets of a point.

\begin{definition}
	For  $p\in\F_q^n$, define $R_p=\{R_{p,0},\dots,R_{p,\mu}\}$ and $E_p=\bigcup_{i=0}^{\mu}R_{p,i}$.
\end{definition}

We now reach the central theorem which will be used to verify the validity of quotient-space bucket configurations.

\begin{theorem}\label{equiv}
	The following are equivalent:
	\begin{enumerate}[i)]
		\item\label{vcond} $\Vcond$.
		\item\label{forall} For $p\in\F_q^{\mu}$, if $a,b\in E_p$ are distinct, then $[a]\ne[b]$.
		\item\label{recovery} For $p\in\F_q^{\mu}$, $R_p$ is a query recovery set for $Q=(p,\dots,p)$.
	\end{enumerate}
\end{theorem}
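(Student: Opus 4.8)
The plan is to prove the three conditions equivalent by a cycle of implications, (i) $\Rightarrow$ (ii) $\Rightarrow$ (iii) $\Rightarrow$ (i), where (i), (ii), (iii) refer to the listed conditions in order. The organizing observation is that $E_p$ is a translate of the union of the coordinate axes: if we set $W=\bigcup_{i=1}^{\mu}\langle e_i\rangle$, then $R_{p,0}=p+\{0\}$ and $R_{p,i}=p+(\langle e_i\rangle\setminus\{0\})$ give $E_p=p+W$ for every $p$. Consequently the pairwise differences of points of $E_p$ are independent of $p$: two elements of $E_p$ have the form $a=p+\alpha e_i$ and $b=p+\beta e_j$, and $a-b=\alpha e_i-\beta e_j\in\langle e_i,e_j\rangle$, while conversely every vector of $\langle e_i,e_j\rangle$ is realized as such a difference. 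Since $[a]=[b]$ holds exactly when $a-b\in\calV$, the entire statement reduces to comparing $\calV$ with the subspaces $\langle e_i,e_j\rangle$.

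For (i) $\Rightarrow$ (ii), I fix $p$ and take distinct $a,b\in E_p$. Writing $a-b=\alpha e_i-\beta e_j\in\langle e_i,e_j\rangle$ as above, this difference is nonzero because $a\neq b$, so the hypothesis $\calV\cap\langle e_i,e_j\rangle=\{0\}$ forces $a-b\notin\calV$, i.e. $[a]\neq[b]$. For (ii) $\Rightarrow$ (iii), I first note that condition 2 of Definition \ref{SolutionSet} is automatic for the family $R_p=\{R_{p,0},\dots,R_{p,\mu}\}$: the lines $p+\langle e_i\rangle$ meet only in $p$, and since $R_{p,0}=\{p\}$ while each $R_{p,i}$ with $i\ge1$ omits $p$, the sets in $R_p$ are pairwise disjoint. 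It then suffices to verify condition 1, that each bucket meets $\bigcup_{i=0}^{\mu}R_{p,i}=E_p$ in at most one point; but this is precisely the assertion that distinct points of $E_p$ lie in distinct buckets, which is (ii). Hence $R_p$ is a query recovery set for $(p,\dots,p)$.

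For (iii) $\Rightarrow$ (i), I argue by contraposition. Assume (i) fails, so some $\langle e_i,e_j\rangle$ contains a nonzero vector $x=\alpha e_i+\beta e_j\in\calV$. Taking $p=0$, the points $a=\alpha e_i$ and $b=-\beta e_j$ both lie in $E_0=W$, are distinct because $a-b=x\neq0$, and satisfy $a-b=x\in\calV$, so $[a]=[b]$. Thus the bucket $[a]$ contains two distinct points of $E_0$, which violates condition 1 of Definition \ref{SolutionSet}; therefore $R_0$ is not a query recovery set and (iii) fails.

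Essentially all of the mathematical content sits in the identity $\{a-b\mid a,b\in E_p\}=\bigcup_{i,j\in[\mu]}\langle e_i,e_j\rangle$ together with the independence of these differences from $p$; granting this, each implication is a brief unwinding of the definitions. The step needing the most care is (iii) $\Rightarrow$ (i), where one must produce an explicit witnessing query and confirm that a nonzero element of $\calV\cap\langle e_i,e_j\rangle$ yields two \emph{distinct} axis points sharing a bucket; distinctness follows from $x\neq0$, and this also covers the degenerate case $i=j$, in which $\langle e_i,e_j\rangle=\langle e_i\rangle$ is one-dimensional (there one may take $x=\gamma e_i$ with $a=\gamma e_i$ and $b=0$).
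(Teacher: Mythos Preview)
Your proof is correct and follows essentially the same approach as the paper: both arguments hinge on writing elements of $E_p$ as $p+\alpha e_i$ so that differences lie in some $\langle e_i,e_j\rangle$, and on observing that the sets $R_{p,0},\dots,R_{p,\mu}$ are pairwise disjoint by construction. The only cosmetic differences are that the paper organizes the proof as two biconditionals $(\mathrm{i})\Leftrightarrow(\mathrm{ii})$ and $(\mathrm{ii})\Leftrightarrow(\mathrm{iii})$ rather than your cycle, and uses an arbitrary $p$ rather than $p=0$ in the contrapositive step; your explicit handling of the degenerate case $i=j$ is a small point the paper leaves implicit.
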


\begin{proof}
We proceed with proving the equivalences.
\begin{itemize}

    \item[$\ref{vcond})\Rightarrow \ref{forall}$)] Let $a,b\in E_p$ be distinct elements for some $p\in \F_q^\mu$. Then, $a=p+\alpha e_i$ and $b=p+\beta e_j$ for some $i,j\in [\mu]$. Since $a\neq b$, it holds that $a-b=\alpha e_i-\beta e_j \neq 0$ which implies that $a-b\not\in \calV$ as per \textit{\ref{vcond})}, implying that $[a]\neq [b]$. 
    
    \item[$\ref{forall})\Rightarrow\ref{vcond})$] 
    We prove this implication by contraposition. Suppose that there exist $\alpha,\beta \in \F_q$ not both zero such that $\alpha e_i - \beta e_j\in \calV$ with $i\neq j$. For arbitrary $p\in \F_q^{\mu}$, define $a=p+\alpha e_i$ and $b=p+\beta e_j$. Then $a\neq b$, but since $a-b=\alpha e_i - \beta e_j\in \calV$, we have $[a]=[b]$. 
    
    \item[$\eqref{forall}\Leftrightarrow\eqref{recovery}$] Since $E_p$ is the union of the sets in $R_p=\{R_{p,i}\mid 0\le i\le \mu\}$, suppose $R_p$ is a query recovery set. Given that $\tau=1$, by Condition \eqref{intersection} each point in the union of these sets must be in a separate bucket. So for all $a,b\in E_p$, $a\ne b\implies [a]\ne[b]$. Similarly, if each point in $E_p$ is in a different bucket, then Condition \eqref{intersection} is satisfied, and the sets $R_{p,0},\dots,R_{p,\mu}$ are all disjoint by construction, so Condition \eqref{empty} is satisfied. This makes $R_p$ a query recovery set for $Q=(p,\dots,p)$.
\end{itemize}


\end{proof}

Note that the first condition of this lemma implies that we need $\mu\ge 3$.
These equivalent conditions lead to some important necessary conditions.
\begin{corollary}
	An affine Cartesian code $\calC_{\F_q^\mu}(\rho)$ is a valid batch code with quotient-space bucket configuration induced by $\calV\subset\F_q^{\mu}$ only if $\Vcond$.
\end{corollary}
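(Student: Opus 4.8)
The plan is to obtain $\Vcond$ as an immediate consequence of Theorem \ref{equiv} by specializing the validity hypothesis to a single worst-case query. Since the code is assumed to be a valid batch code for the parameters $t=\mu+1$, $\tau=1$, Definition \ref{ValidBatchGenerated} guarantees that \emph{every} query admits a query recovery set; I would exploit only the query $Q=(p,\dots,p)$ consisting of the same point $p\in\F_q^{\mu}$ repeated $\mu+1$ times.

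First I would fix an arbitrary $p\in\F_q^{\mu}$ and apply validity to $Q=(p,\dots,p)$ to obtain a query recovery set $R^Q$. Each of its $\mu+1$ members is a recovery set for $p$, and under the standing convention that we use only the recovery sets $R_{p,0},\dots,R_{p,\mu}$, every member equals some $R_{p,i}$. The key step is a counting argument: by the disjointness requirement \eqref{empty}, no two members may coincide, since each $R_{p,i}$ is nonempty and two identical choices would overlap. Hence the $\mu+1$ chosen directions are pairwise distinct, and as there are exactly $\mu+1$ available sets $R_{p,0},\dots,R_{p,\mu}$, the query recovery set is forced to use all of them, giving $R^Q=R_p$.

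Having established that $R_p$ is a query recovery set for $Q=(p,\dots,p)$, I would observe that this is precisely statement \ref{recovery}) of Theorem \ref{equiv} for the point $p$. Since $p$ was arbitrary, condition \ref{recovery}) holds for all $p\in\F_q^{\mu}$, and the theorem's equivalence \ref{recovery})$\,\Leftrightarrow\,$\ref{vcond}) then yields $\Vcond$, which is the desired necessary condition.

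The only genuinely delicate point is the counting step that pins down $R^Q=R_p$: one must argue that the disjointness condition \eqref{empty} cannot be satisfied using fewer than all $\mu+1$ directions, which rests on the fact that distinct $R_{p,i}$ are nonempty and pairwise disjoint while a repeated choice overlaps itself. Everything else is a direct invocation of Theorem \ref{equiv}, so this corollary is essentially a repackaging of the theorem's equivalence in the contrapositive-friendly ``only if'' form.
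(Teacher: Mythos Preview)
Your proposal is correct and is precisely the argument the paper leaves implicit: the corollary is stated without proof, immediately after Theorem \ref{equiv}, as a direct necessary condition, and the natural way to extract it is exactly your specialization to the query $Q=(p,\dots,p)$ together with the pigeonhole observation that, under the standing convention restricting to the recovery sets $R_{p,0},\dots,R_{p,\mu}$, the disjointness requirement \eqref{empty} forces $R^Q=R_p$, whence condition \ref{recovery}) of Theorem \ref{equiv} and thus \ref{vcond}). There is nothing to add.
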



Next, we specify a way that a batch code over $\F_q^{\mu-1}$ may be expanded to a batch code over $\F_q^{\mu}$. This is then used in constructing a bucket configuration for affine Cartesian codes by induction.

\begin{theorem}\label{Expanding}
	Consider the puncturing function $\phi:\F_q^\mu\to\F_q^{\mu-1}$ defined by $\phi((a_1,\dots,a_\mu))=(a_1,\dots,a_{\mu-1})$ and let $\calC=\calC_{\F_q^{\mu}}(\rho)$ and $\ov{\calC}=\calC_{\F_q^{\mu-1}}(\rho)$. 
	Let $\calV\subseteq\F_q^{\mu}$ and $\ov{\calV}\subseteq\F_q^{\mu-1}$ be subspaces such that $\phi(\calV)=\ov{\calV}$ and $\Vcond$. 
	If $\ov{\calC}$ is a batch code with quotient-space bucket configuration induced by  $\ov{\calV}$ and  $t=\mu$, then $\calC$ is a batch code with quotient-space bucket configuration induced by  $\calV$ and $t=\mu+1$.
\end{theorem}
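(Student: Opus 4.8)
The plan is to treat $\phi$ as a bridge between the two bucket configurations and to spend the newly available vertical direction $e_\mu$ on the single extra query point, lifting a downstairs solution for the rest. First I would record the compatibility of $\phi$ with the two quotient structures. Since $\Vcond$ forces $\calV\cap\langle e_\mu\rangle=\{0\}$, the restriction $\phi|_\calV$ is injective, so $\dim\ov\calV=\dim\calV$ and the induced map $\F_q^\mu/\calV\to\F_q^{\mu-1}/\ov\calV$, $p+\calV\mapsto\phi(p)+\ov\calV$, is surjective with fibres of size $q$; concretely, the $q$ upstairs buckets lying over a fixed downstairs bucket are $p+\lambda e_\mu+\calV$, $\lambda\in\F_q$. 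The two facts I will use repeatedly are: (a) if $a,b\in\F_q^\mu$ satisfy $[\phi(a)]\ne[\phi(b)]$ downstairs then $[a]\ne[b]$ upstairs, since $a-b\in\calV$ would give $\phi(a)-\phi(b)\in\phi(\calV)=\ov\calV$; and (b) for $j\in[\mu-1]$ the map $\phi$ restricts to a bijection from the line $p+\langle e_j\rangle$ onto $\phi(p)+\langle e_j\rangle$, carrying the horizontal recovery set $R_{p,j}$ onto $\ov R_{\phi(p),j}$, whereas it collapses each vertical line $p+\langle e_\mu\rangle$ to the single point $\phi(p)$.

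From (a) and (b) I would prove a lifting principle: if a downstairs query recovery set using only directions $0,\dots,\mu-1$ satisfies Conditions \eqref{intersection} and \eqref{empty}, then lifting each set to the same direction upstairs again satisfies both. Indeed, distinct downstairs buckets pull back to distinct upstairs buckets by (a), so Condition \eqref{intersection} is inherited, and any upstairs overlap would project by (b) to a downstairs overlap, so Condition \eqref{empty} is inherited; by Corollary \ref{qRecovery} (using $\rho<q-1$) each lifted set is a legitimate recovery set. The reduction is then: pick a query point, say $p_{\mu+1}$, recover it vertically by $R_{p_{\mu+1},\mu}$, and apply the downstairs hypothesis with $t=\mu$ to the projected query $(\phi(p_1),\dots,\phi(p_\mu))$ to obtain a solution, which I lift. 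The vertical set $R_{p_{\mu+1},\mu}$ occupies, among the $q$ buckets of the fibre over $[\phi(p_{\mu+1})]$, exactly all but the bucket $p_{\mu+1}+\calV$ of $p_{\mu+1}$ itself, and each of its points has $\phi$-image equal to $\phi(p_{\mu+1})$.

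The hard part, which I expect to be the main obstacle, is the interaction between this vertical set and the lifted solution. Because the lifted solution meets each downstairs bucket at most once, at most one lifted recovered point $c$ lies in the fibre over $[\phi(p_{\mu+1})]$; writing $c-p_{\mu+1}=v+\lambda e_\mu$ with $v\in\calV$, a Condition \eqref{intersection} clash occurs precisely when $\lambda\ne0$, i.e. when $c$ lands outside the missed bucket $p_{\mu+1}+\calV$, and a Condition \eqref{empty} clash can occur only if some $p_s$ actually lies on the vertical line through $p_{\mu+1}$ and was assigned direct access. I would resolve these by a case analysis on how the $p_i$ distribute over vertical lines. When two query points share a vertical line, I would recover the upper one vertically and re-recover the lower one horizontally: the subspace condition $\Vcond$ (together with its downstairs consequence $\ov\calV\cap\langle e_j\rangle=\{0\}$) keeps that horizontal set out of the fibre and disjoint from the vertical set, so this case is clean.

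The genuine crux is the generic case, where no two query points are vertically aligned yet a lifted horizontal set still reaches into the fibre over $[\phi(p_{\mu+1})]$ in a wrong bucket. Here I would exploit the freedom in the choice of the special point together with Corollary \ref{queryEquiv}, arguing that among the $\mu+1$ candidate choices of which point to recover vertically there is always one for which the single residual point $c$ either does not exist or falls into the missed bucket $p_{\mu+1}+\calV$. Establishing this selection cleanly—showing the unavoidable lone collision can always be routed into the one bucket the vertical set vacates—is the step on which the whole induction rests, and it is where I expect the real work to lie.
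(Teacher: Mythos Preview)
Your setup and lifting principle match the paper, but the crux you identify is left unproved, and the mechanism you propose for it is not the right one. The paper never varies which query point is special. It keeps $p_{\mu+1}$ fixed and instead chooses between direct access and the vertical direction for it: after lifting a downstairs solution for $(\phi(p_1),\dots,\phi(p_\mu))$ to $E=\bigcup_{s=1}^{\mu}R_{p_s,i_s}$, set $i_{\mu+1}=\mu$ if some $z\in E$ satisfies $z\sim p_{\mu+1}$, and $i_{\mu+1}=0$ otherwise. In your language, the unique possible collision point $c$ either lies in $[p_{\mu+1}]$ (your $\lambda=0$) or it does not. If it does, use the vertical set, which occupies every bucket of the fibre over $[\phi(p_{\mu+1})]$ \emph{except} $[p_{\mu+1}]$ and so misses $c$. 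If it does not (or $c$ does not exist), then nothing in $E$ lies in $[p_{\mu+1}]$, so direct access $R_{p_{\mu+1},0}=\{p_{\mu+1}\}$ causes no clash with Condition~\eqref{intersection}, and $p_{\mu+1}\notin E$, so Condition~\eqref{empty} is immediate. This dichotomy dissolves the case you flagged as the real work; your proposed pigeonhole over the $\mu+1$ choices of special point is neither needed nor obviously valid, since changing the special point changes the downstairs query and hence the solution itself.

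Two smaller gaps. First, your claim that a Condition~\eqref{empty} clash between $R_{p_{\mu+1},\mu}$ and a lifted set can only arise from a directly accessed $p_s$ on the vertical line is false: a horizontal set $R_{p_s,i_s}$ with $i_s\in[\mu-1]$ meets the vertical line through $p_{\mu+1}$ whenever $\phi(p_{\mu+1})\in\phi(p_s)+\langle e_{i_s}\rangle$, with no constraint on $p_s$ itself. The paper handles this through the same $z$-device, since any such intersection point projects into $[\phi(p_{\mu+1})]$ and can then be played off against $z$ using $\Vcond$ and the downstairs conditions. Second, your inheritance argument for Condition~\eqref{intersection} via fact~(a) covers only $[\phi(a)]\ne[\phi(b)]$; when $[a]=[b]$ upstairs forces $\phi(a)=\phi(b)$ downstairs you still need the extra step that disjointness of the downstairs sets and injectivity of $\phi$ on each horizontal line rule out $a\ne b$, which the paper carries out explicitly.
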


\begin{proof}
	We claim that for  $a\in\F_q^{\mu}$, it holds that $\phi([a])\subseteq[\phi(a)]$. If  $b\in[a]$, it holds that $a-b\in \calV$. This means that $\phi(a-b)\in\phi(\calV)=\ov{\calV}$, and by linearity of $\phi$, we have $\phi(a)-\phi(b)\in\ov{\calV}$. This in turn means $\phi(b)\in[\phi(a)]$. Since this is true for  $\phi(b)\in\phi([a])$, we have that $\phi([a])\subseteq[\phi(a)]$. From this, we see that $a\sim b\Rightarrow [b]=[a]\Rightarrow\phi([b])\subset[\phi(a)]$. In particular, $\phi(b)\in\phi([b])$, and since $\phi(b)\in[\phi(a)]\Rightarrow[\phi(a)]=[\phi(b)]$, we have that $a\sim b\Rightarrow\phi(a)\sim\phi(b)$.
	
	Now consider any query $Q=(p_1,\dots,p_\mu,p_{\mu+1})$ of points in $\F_q^{\mu}$. The multiset $Q'=(\phi(p_1),\dots,\phi(p_{\mu}))$ is a query of $\mu$ elements in $\F_q^{\mu-1}$. For any $a\in\F_q^{\mu}$, let $\ov{a}=\phi(a)\in\F_q^{\mu-1}$. Then we may write $Q'=(\ov{p_1},\dots,\ov{p_{\mu}})$, and since $\ov{\calC}$ is a batch code that can satisfy any query of size $t=\mu$, there exists some query recovery set $\ov{R}=\{R_{\ov{p_1},i_1},\dots,R_{\ov{p_{\mu}},i_{\mu}}\}$ such that
	\begin{enumerate}
		\item $\left|\left(\bigcup_{s= 1}^n R_{\ov{p_s},i_{\ell}}\right)\cap[b]\right|\leq1$ $\forall\,[b]\in\F_q^{n-1}/\ov{\calV}$
		\item $R_{\ov{p_r},i_r}\cap R_{\ov{p_s},i_s}=\emptyset$ $\forall$ $r,s\in[\mu]$ where $r\ne s$.
	\end{enumerate}
	
	Now let $E=\cup_{s=1}^\mu R_{p_s,i_s}$. If there exists some $z\in E$ such that $z\sim p_{\mu+1}$, then let $i_{\mu+1}=\mu$, otherwise let $i_{\mu+1}=0$. We claim that
	\[R=\{R_{p_1,i_1},\dots,R_{p_{\mu},i_{\mu}},R_{p_{\mu+1},i_{\mu+1}}\}\]
	is a valid recovery set for the query $Q=(p_1,\dots,p_{\mu+1})$. 
	
	First, we check Condition \eqref{intersection} for $R$. Assume by contradiction that for some some $[c]\in\F_q^{\mu}/\calV$,  there exist distinct $a,b\in\bigcup_{\ell=1}^{\mu+1}R_{p_s,i_s}\cap[c]$. Since $a,b\in [c]$, we have $a\sim b$. As seen before, this means that $\phi(a)\sim\phi(b)$. We consider two cases: $\phi(a)=\phi(b)$ or $\phi(a)\neq \phi(b)$. 
	
	If $\phi(a)=\phi(b)$, then by definition of $\phi$, we see that $a-b\in\langle e_n\rangle$. This means that $b\in E_a$. Since $\Vcond$, part \textit{\ref{forall})} of Theorem \ref{equiv} implies that $[a]\ne[b]$, a contradiction. 
	
	Thus, instead, we must have $\phi(a)\ne\phi(b)$, which we write as $\ov{a}\ne\ov{b}$. This leads to a few possibilities. Ab before, we differentiate two cases: either $a,b\in E=\bigcup_{s=1}^{\mu}R_{p_s,i_s}$ or without loss of generality $a\in E$ and $b\in R_{p_{\mu+1},i_{\mu+1}}$.

	If $a,b\in E=\bigcup_{s=1}^{\mu}R_{p_s,i_s}$, then $\ov{a},\ov{b}\in\bigcup_{s=1}^{\mu}R_{\ov{p_s},i_s}$. This leads to a contradiction to Condition \eqref{intersection} for $\ov{R}$, as $\ov{a}$ and $\ov{b}$ are in the same bucket. By construction, $\phi(R_{p_{\mu+1},i_{\mu+1}})=\{\phi(p_{\mu+1})\}$, so $a,b\in R_{p_{\mu+1},i_{\mu+1}}$ implies $\ov{a},\ov{b}\in\phi(R_{p_{\mu+1},i_{\mu+1}})=\{\phi(p_{\mu+1})\}$, or $\ov{a}=\ov{b}=\phi(p_{\mu+1})$, and we would have a contradiction to $\ov{a}\ne\ov{b}$.
	
	Thus, we suppose without loss of generality that $a\in E$, and $b\in R_{p_{\mu+1},i_{\mu+1}}$. There are two possibilities. If $i_{\mu+1}=0$, then by our selection of $i_{\mu+1}$, there is no $z\in E$ such that $p_{\mu+1}\sim z$, but $b\in R_{p_{\mu+1},0}=\{p_{\mu+1}\}$, so $b=p_{\mu+1}$, which means $a\sim b=p_{\mu+1}$, a contradiction. If instead $i_{\mu+1}=\mu$, this means $\exists z\in E$ such that $z\sim p_{\mu+1}$. Then there is some $r\in[\mu]$ such that $z\in R_{p_r,i_r}$, and since $a\in E$, we also have some $s\in[\mu]$ such that $a\in R_{p_s,i_s}$. 
	This again leads to two possibilities: either $s=r$ or $s\neq r$. 
	
	Suppose $s=r$. Since $b\in R_{p_{\mu+1},i_{\mu+1}}$, we have that $b\ne p_{\mu+1}$ and also $b\in E_{p_{\mu+1}}$. We also have $p_{\mu+1}\in E_{p_{\mu+1}}$, so $b\ne p_{\mu+1}\implies b\not\sim p_{\mu+1}$ by Theorem \ref{equiv}. Since $z\sim p_{\mu+1}$, we must have $b\not\sim z$, or transitivity of $\sim$ would break down. Since $a\sim b$, this also means that $a\not\sim z$, so certainly $z\ne a$. Since $s=r$, we have $a,z\in R_{p_r,i_r}=R_{p_s,i_s}$, so $a-z=\alpha e_{i_r}$ for some $\alpha\in\F_q\setminus\{0\}$. We also have $b=p_{\mu+1}+\beta e_{\mu}$ for some $\beta\in\F_q\setminus\{0\}$, and we consider that $a\sim b$ and $z\sim p_{\mu+1}$. We may combine these as $a-z\sim b-p_{\mu+1}$, or $\alpha e_{i_r}\sim \beta e_{\mu}$. But this means that $\alpha e_{i_r}-\beta e_{\mu}\in \calV$. The only way this would not be a contradiction to $\Vcond$ is if $\alpha e_{i_r}=\beta e_{\mu}$, but that would mean $i_r=\mu$, which is impossible given our construction.
	
	Thus, we consider $s\ne r$. If $\ov{a}=\ov{z}$, then $\ov{a}\in R_{\ov{p_s},i_s}$, and $\ov{a}=\ov{z}\in R_{\ov{p_r},i_r}$, so $\ov{a}\in R_{\ov{p_r},i_r}\cap R_{\ov{p_s},i_s}$. This is a contradiction to \eqref{empty} for $\ov{R}$. If instead $\ov{a}\ne\ov{z}$, note that $\ov{b}=\ov{p_{\mu+1}}\sim\ov{z}$. This means that $\ov{a}\sim\ov{b}\sim\ov{z}$, so $\ov{a}\sim\ov{z}$, and this is a contradiction to Condition \eqref{intersection} for $\ov{R}$. 
	
	This concludes the proof that Condition \eqref{intersection} holds for $R$. Next, we show that Condition \eqref{empty} holds for $R$. 
	
	Again, consider the possibilities for a contradiction. If $R_{p_r,i_r}\cap R_{p_s,i_s}\ne\emptyset$ for some $r,s\in[\mu+1]$ such that $r\ne s$, then there are two possibilities: either $r,s\in[\mu]$ or without loss of generality $r\in[\mu]$ and $s=\mu+1$. 
	
	If $r,s\in[\mu]$, then this would mean that there exists some $ p\in\F_q^{\mu}$ such that $p\in R_{p_r,i_r}\cap R_{p_s,i_s}$. But then $\ov{p}\in R_{\ov{p_r},j_r}$ and $\ov{p}\in R_{\ov{p_s},i_s}$, a contradiction to Condition (2) for $\ov{R}$. 
	
	Thus, without loss of generality, we have $r\in[\mu]$ and $s=\mu+1$. There are again two possibilities: either $i_{\mu+1}=0$ or $i_{\mu+1}=\mu$.
	
	If $i_{\mu+1}=0$, this means $R_{p_s,i_s}=R_{p_{\mu+1},0}=\{p_{\mu+1}\}$, and so the intersection must be $\{p_{\mu+1}\}$. This would mean $p_{\mu+1}\in R_{p_r,i_r}$ and so $p_{\mu+1}\in E$. Since $p_{\mu+1}\sim p_{\mu+1}$ by the reflexive property of $\sim$, it is a $z\in E$ such that $z\sim p_{\mu+1}$, a contradiction to $i_{\mu+1}=0$.
	
	If instead $i_{\mu+1}=\mu$, we have some $z\in E$ such that $z\sim p_{\mu+1}$, so $z\in R_{p_k,i_k}$ for some $k\in[\mu]$. Since $R_{p_r,i_r}\cap R_{p_{\mu+1},i_{\mu+1}}\ne\emptyset$, we also have some $a\in R_{p_{\mu+1},\mu}$ such that $a\in R_{p_r,i_r}$. As before, if $k=r$, we have $a-z=\alpha e_k$ for some $\alpha\in\F_q\setminus\{0\}$, and we have $a-p_{\mu+1}=\beta e_{\mu}$ for some $\beta\in\F_q\setminus\{0\}$. But then $a-z\sim a-p_{\mu+1}$, so $\alpha e_k\sim\beta e_{\mu}$, which leads to a contradiction as before.
	
	This leaves $k\ne r$. Again, if $\ov{a}=\ov{z}$, this leads to a contradiction to (2) for $\ov{R}$, and if $\ov{a}\ne\ov{z}$, we still have $a\in R_{p_{\mu+1},\mu}$, so $\ov{a}\in\phi(R_{p_{\mu+1},\mu})=\{\ov{p_{\mu+1}}\}$, which means $\ov{a}=\ov{p_{\mu+1}}\sim\ov{z}$, and so this contradicts Condition \eqref{intersection} for $\ov{R}$. 
	
	We have once again contradicted all alternatives, so Condition \eqref{empty} must be satisfied for $R$. Because $R$ satisfies both conditions, $R$ is a valid query recovery set. Since this holds for any query $Q=(p_1,\dots,p_{\mu+1})$ of $\mu+1$ elements in $\F_q^{\mu}$, $\calC$ is a batch code that can satisfy $t=\mu+1$ requests.
\end{proof}

\subsection{The Case of a Diagonal Subspace}\label{diagSubspace}
Using the subspace $\calV=\langle(1,\dots,1)\rangle\subset\F_q^{\mu}$, we are able to generate a valid bucket configuration as long as $q\ge 3$ and $\mu\ge 3$:

\begin{theorem}\label{diagBatch}
	Let $q\ge 3$, $\mu\ge 3$ and $\calV=\langle(1,\dots,1)\rangle\subset\F_q^{\mu}$. Then $\calC_{\F_q^\mu}(\rho)$ is a batch code with quotient-space bucket configuration induced by $\calV$ and with properties $m=q^{\mu-1}$, $\tau=1$, and $t=\mu+1$.
\end{theorem}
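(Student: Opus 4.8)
The plan is to prove the statement by induction on $\mu$, using Theorem \ref{Expanding} as the engine of the induction. Writing $\mathbf{1}=(1,\dots,1)$, the role of Theorem \ref{Expanding} is to convert a dimension-$(\mu-1)$ batch code for $t=\mu$ into a dimension-$\mu$ batch code for $t=\mu+1$, provided the subspaces are compatible, so the first task is to check its two hypotheses for $\calV=\langle\mathbf{1}\rangle\subseteq\F_q^{\mu}$ and $\ov{\calV}=\langle\mathbf{1}\rangle\subseteq\F_q^{\mu-1}$. Compatibility $\phi(\calV)=\ov{\calV}$ is immediate, since deleting the last coordinate of $\mathbf{1}\in\F_q^{\mu}$ yields $\mathbf{1}\in\F_q^{\mu-1}$. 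The condition $\Vcond$ holds because every nonzero element of $\langle\mathbf{1}\rangle$ has all $\mu\ge 3$ coordinates nonzero (and equal), so it cannot lie in a span $\langle e_i,e_j\rangle$, which has at most two nonzero coordinates. The same checks run identically for $\ov{\calV}$ in dimension $\mu-1$ once $\mu-1\ge 3$, so the induction is internally consistent.

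The inductive step is then a direct invocation: for $\mu\ge 4$ the induction hypothesis gives that $\calC_{\F_q^{\mu-1}}(\rho)$ is a batch code for the diagonal $\ov{\calV}$ with $t=\mu$, and Theorem \ref{Expanding} upgrades this to the claim for $\calC_{\F_q^{\mu}}(\rho)$ with $t=\mu+1$. The subtlety is that the base case must be taken at $\mu=3$ rather than $\mu=2$: the naive base ``dimension $2$, $t=3$'' is in fact false, since $\Vcond$ fails in dimension $2$ (there $\langle e_1,e_2\rangle=\F_q^2\supseteq\calV$), and concretely a query $(p,p,p)$ would be forced to use all three recovery sets $R_{p,0},R_{p,1},R_{p,2}$, whose union places $p+e_1$ and $p-e_2$ in a common coset of $\langle(1,1)\rangle$. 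Hence the base case $\mu=3$ has to be proved by hand, and I expect this to be the main obstacle.

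For the base case I would argue directly on $\F_q^3$ with the buckets being the cosets of $\langle\mathbf{1}\rangle$. Given a query of $4=\mu+1$ points, first apply Corollary \ref{queryEquiv} to replace any two query points lying in a common coset by a single repeated point, so that distinct query points occupy distinct cosets; then split on the multiplicity partition of $4$. The fully coincident case is exactly condition \ref{recovery}) of Theorem \ref{equiv}, and the fully separated partition $1+1+1+1$ is handled by direct access. For the mixed partitions $3+1$, $2+2$, and $2+1+1$ I would assign direct access to the multiplicity-one points and coordinate-line recovery sets to the repeated points. The combinatorial fact that makes this work is that for two points $q,q'$ in distinct cosets the line $R_{q,i}$ meets the coset $[q']$ if and only if $q'-q\in\langle e_i,\mathbf{1}\rangle$, and for $\mu\ge 3$ a vector outside $\langle\mathbf{1}\rangle$ lies in at most one of the planes $\langle e_i,\mathbf{1}\rangle$ (since $e_i,e_j,\mathbf{1}$ are linearly independent, two such planes meet only in $\langle\mathbf{1}\rangle$). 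Thus at most one coordinate direction is ``bad'', leaving a common good direction $i$ through which the lines can be routed, and the same criterion shows two parallel lines $R_{q,i},R_{q',i}$ in a good direction share no coset and are disjoint.

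The hardest bookkeeping, where I would be most careful, is the partition $2+2$: two coset-avoidance constraints and a disjointness constraint must be met simultaneously by the two chosen lines. The observation above lets one take \emph{both} lines in the same good direction $i$, after which coset-avoidance and disjointness all collapse to the single requirement $q'-q\notin\langle e_i,\mathbf{1}\rangle$. Finally I would note that the hypothesis $q\ge 3$ enters not through this coset combinatorics (which only uses the quotient structure and would survive $q=2$) but through the validity of the recovery sets themselves: Corollary \ref{qRecovery} requires $\rho<q-1$, so $q\ge 3$ is what leaves room for a nontrivial degree $\rho$.
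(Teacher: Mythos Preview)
Your proposal is correct and follows essentially the same approach as the paper: induction on $\mu$ via Theorem~\ref{Expanding}, with the base case $\mu=3$ handled by reducing (via Corollary~\ref{queryEquiv}) to the multiplicity partitions $4$, $3{+}1$, $2{+}1{+}1$, $2{+}2$, $1{+}1{+}1{+}1$ and treating each directly. Your criterion that $R_{a,i}$ meets $[b]$ iff $b-a\in\langle e_i,\mathbf{1}\rangle$, together with the observation that a vector outside $\langle\mathbf{1}\rangle$ lies in at most one such plane, is a slightly cleaner packaging of exactly the computations the paper carries out case by case (in particular your $2{+}2$ argument with both lines in a common good direction is precisely the paper's choice $R=\{R_{a,0},R_{a,j},R_{b,0},R_{b,j}\}$).
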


\begin{proof}
	With this configuration, since $\dim(\calV)=1$, we have $m=q^{\mu-1}$.
	
	We begin with the base case $\mu=3$, where $\calV=\langle(1,1,1)\rangle$, and $t=4$. Since $\Vcond$, recovering four copies of any one point is possible, and by Corollary \ref{queryEquiv}, so is recovering any four points in the same bucket. Recovering four query points in different buckets is trivial using all direct access. This leaves the cases where the queries belong to either $2$ or $3$ distinct buckets. Without loss of generality and by Corollary \ref{queryEquiv}, we need to address the queries $Q=(a,a,a,b)$, $Q=(a,a,b,c)$, or $Q=(a,a,b,b)$, where $a,b,c\in \F_q^{3}$ such that $[a],[b]$, and $[c]$ are all distinct. We handle each of these separately:
	\begin{enumerate}
		\item Consider $Q=(a,a,a,b)$. If $[b]\notin[E_a]$, then any three recovery sets may be used for $a$, and $b$ may be directly accessed. Otherwise, there is one value $i\in[\mu]$ such that $[b]\in[R_{a,i}]$, and we can satisfy the request using $R=\{R_{a,0},R_{a,j_1},R_{a,j_2},R_{b,0}\}$ such that $j_1,j_2\in\left[\mu\right]\setminus i$.
		\item Recovering $Q=(a,a,b,c)$ is similar to the first case, using $R_{b,0}$ and $R_{c,0}$. Since these eliminate at most $2$ recovery sets of $a$ through intersection with $[E_a]$, there will be at least one remaining recovery set of $a$ besides the direct access.
		\item Consider $Q=(a,a,b,b)$.  Utilize $R_{a,0}$ and $R_{b,0}$. 
		If $[b]\in[R_{a,i}]$ for some $i\in\left[\mu\right]$, let $j\in\left[\mu\right]\setminus i$,  otherwise let $j$ be any $j\in\left[\mu\right]$. This means that $[b]\notin[R_{a,j}]$ by construction. To see that we can use both $R_{a,j}$ and $R_{b,j}$, assume by way of contradiction that there exists some $[p]\in[R_{a,j}]\cap[R_{b,j}]$. Then
		\[[p]=[a+\alpha e_j]=[b+\beta e_j],\]
		where $\alpha,\beta\in\F_q\setminus\{0\}$. This means $(a+\alpha e_j)-(b+\beta e_j)=a+(\alpha-\beta)e_j-b\in \calV$, which in turn means $[a+(\alpha-\beta)e_j]=[b]$. Either $\alpha=\beta$, so $[a]=[b]$, a contradiction, or $[b]\in[R_{a,j}]$, also a contradiction. Therefore $[R_{a,j}]\cap [R_{b,j}]=\emptyset$. Thus we may use $R=\{R_{a,0},R_{a,j},R_{b,0},R_{b,j}\}$.
	\end{enumerate}

	Now, assume that for some $\mu>3$, $\ov{\calV}=\langle(1,\dots,1)\rangle\subset\F_q^{\mu-1}$ generates a valid batch code with $t=\mu$. Then by Theorem \ref{Expanding}, since $\calV=\langle(1,\dots,1)\rangle\in\F_q^{\mu}$ satisfies $\Vcond$, we can extend the batch code with quotient-space buckets configuration induced by $\ov{\calV}$ into a batch code with quotient-space bucket configuration induced by $\calV$ that can satisfy any query of $t=\mu+1$ elements. By induction, we then have that for any $\mu\ge 3$, $\calV=\langle(1,\dots,1)\rangle\subset\F_q^{\mu}$ generates buckets for a batch code with $t=\mu+1$.
\end{proof}

\section{Affine Cartesian Codes}\label{generalCart}

Finally, we want to apply the techniques we have developed so far to all of the affine Cartesian codes.

\begin{theorem}
	Let  $C_X(\rho)$ be an affine Cartesian code with $X=A_1\times\dots\times A_{\mu}$ of degree $\rho$ where $\mu\ge 3$. Let $\nu(\rho):=|\{i\in [\mu]\mid \rho+1<|A_i|\}|$. If $\nu(\rho)\ge 3$, then $C_X(\rho)$ is a batch code capable of satisfying any $t=\nu(\rho)+1$.
\end{theorem}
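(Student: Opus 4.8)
The plan is to reduce the general affine Cartesian case to the Reed--Muller result already established in Theorem \ref{diagBatch}. Write $G=\{i\in[\mu]\mid \rho+1<|A_i|\}$ for the set of \emph{good} directions, so that $|G|=\nu(\rho)=:\nu\ge 3$, and after relabeling coordinates assume $G=\{1,\dots,\nu\}$. By Lemma \ref{recoverable} the only directions usable for indirect recovery of a point $p\in X$ are those in $G$, so the available recovery sets are $R_{p,0}=\{p\}$ together with $R_{p,i}$ for $i\in G$: there are exactly $\nu+1$ of them, matching the target query size $t=\nu+1$. Moreover $q\ge|A_i|\ge\rho+2\ge 3$ for $i\in G$ when $\rho\ge1$ (the case $\rho=0$ giving a trivial code of constants), so Theorem \ref{diagBatch} applies with $\nu$ in place of $\mu$.

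First I would fix the bucket configuration. Let $\calV=\langle\sum_{i\in G}e_i\rangle\subseteq\F_q^\mu$ and take the buckets of $\calC_X(\rho)$ to be the nonempty sets $(p+\calV)\cap X$. Because $\calV$ is supported on $G$ with $|G|\ge 3$, every nonzero vector of $\calV$ has Hamming weight at least $3$ and hence lies outside every $\langle e_i,e_j\rangle$, so $\calV$ satisfies condition \ref{vcond} of Theorem \ref{equiv}. The decisive structural feature is that $\calV$ moves \emph{only} the good coordinates: two points share a bucket exactly when they agree in all bad coordinates and their good coordinates differ by a common scalar. Consequently both the buckets and the recovery sets $R_{p,i}$ (for $i\in G\cup\{0\}$) are confined to the \emph{slice} of $X$ obtained by fixing the bad coordinates, each slice being a copy of the grid $\prod_{i\in G}A_i$.

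Next I would solve the problem one slice at a time. Fixing a slice, identifying it with $\prod_{i\in G}A_i\subseteq\F_q^\nu$, and embedding the sub-query it receives into $\F_q^\nu$, I would invoke Theorem \ref{diagBatch} for $\calC_{\F_q^\nu}(\rho)$ with diagonal subspace $\langle(1,\dots,1)\rangle$ (extending a short sub-query to length $\nu+1$ and discarding the extra recovery sets afterward). This yields a query recovery set $\ov{R}$ whose recovery directions all lie in $\{0,1,\dots,\nu\}$, i.e.\ in $G\cup\{0\}$. I would then \emph{restrict} each $\ov{R}_{p,i}$ to the grid, obtaining $R_{p,i}=\ov{R}_{p,i}\cap\prod_{j\in G}A_j$, which is a genuine recovery set for $p$ in $\calC_X(\rho)$ by Lemma \ref{recoverable} precisely because $i$ is good. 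The key monotonicity observation is that restriction only shrinks sets: each $R_{p,i}\subseteq\ov{R}_{p,i}$ and each affine Cartesian bucket is contained in the corresponding full diagonal coset of $\F_q^\nu$, so Conditions \eqref{intersection} and \eqref{empty} for $\ov{R}$ pass verbatim to the restricted family. Hence each slice's sub-query is satisfiable.

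Finally I would assemble the global solution as the union of the per-slice query recovery sets. Since recovery sets and buckets never cross slices, two recovery sets attached to points of different slices are automatically disjoint and every bucket meets recovery sets from only one slice; therefore Conditions \eqref{intersection} and \eqref{empty} hold globally as soon as they hold within each slice, giving a valid query recovery set for every query of size $\nu+1$. I expect the main obstacle to be the reduction step itself: establishing cleanly that the configuration $(p+\calV)\cap X$ respects the slice decomposition so the bad coordinates genuinely drop out, and that a valid Reed--Muller query recovery set survives restriction to the sub-grid. The monotonicity of Conditions \eqref{intersection} and \eqref{empty} under passing to subsets of both recovery sets and buckets is what makes this work; the one point demanding care is that every recovery direction selected by the Reed--Muller solution is good and hence usable in $\calC_X(\rho)$, which holds because only the $\nu$ good coordinates are embedded into $\F_q^\nu$.
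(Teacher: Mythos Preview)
Your argument is correct and takes a genuinely different route from the paper's. Both proofs project to $\F_q^{\nu}$, invoke Theorem \ref{diagBatch}, and then restrict the resulting recovery sets to the Cartesian grid via the monotonicity of Conditions \eqref{intersection} and \eqref{empty}. The divergence is in how the bad coordinates are handled. You choose the \emph{partial} diagonal $\calV=\langle\sum_{i\in G}e_i\rangle$, which fixes all bad coordinates; this gives a clean slice decomposition of $X$ into copies of $\prod_{i\in G}A_i$, so the whole problem factors slice-by-slice and no lifting back to $\F_q^{\mu}$ is needed. The paper instead takes the \emph{full} diagonal $\calV=\langle(1,\dots,1)\rangle\subseteq\F_q^{\mu}$, for which slices are not preserved; to certify validity in $\F_q^{\mu}$ it appends $\mu-\nu$ dummy query points and applies Theorem \ref{Expanding} recursively $\mu-\nu$ times, using that the first $\nu+1$ recovery indices are unchanged under each extension, before finally restricting to $X$. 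Your approach is more elementary and self-contained (it bypasses Theorem \ref{Expanding} altogether), while the paper's approach shows how the general case fits into the inductive machinery already built. One small point: your dismissal of $\rho=0$ as ``trivial'' skips the verification that $q\ge 3$ there (you only get $q\ge 2$), but the constant code case is indeed easily handled directly.
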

\begin{proof}

    For simplicity of notation, let $\nu:=\nu(\rho)$. Without loss of generality, under a change of variable,  we can consider $\{i\in [\mu]\mid \rho+1<|A_i|\}=[\nu]$. Consider the puncturing function $\phi:\F_q^\mu\rightarrow \F_q^{\nu}$ that is obtained by puncturing the coordinates in the subset $[\mu]\setminus [\nu]$.
    Let $\calV=\langle(1,\dots,1)\rangle\subseteq \F_q^\mu$ and define the buckets for $\calC_{\phi(X)}(\rho)$ to be the sets $[\tilde{p}]^{\phi(X)} :=[\tilde{p}]\cap \phi(X)$ for $\tilde{p}\in \phi(X)$ and where $[\tilde{p}]\in \F_q^{\nu}/\phi(\calV)$. We will show that for a query $Q=(p_1,\dots,p_{\nu+1})\in X^{\nu+1}\subseteq (\F_q^\mu)^{\nu+1}$ for which we wish to recover $f(p_1),\dots,f(p_{\nu+1})$, there exists a valid recovery set $\{R_{p_1,i_1},\dots R_{p_{\nu+1},i_{\nu+1}}\}$ where $R_{p_{\ell},i_{\ell}}\subseteq X$ with $i_\ell\in I\cup\{0\}$ for all $\ell\in [\nu+1]$.
    
     Consider the affine Cartesian code $\calC_{\phi(X)}(\rho)$. We show that this code has the same batch properties as the code $\calC_{\F_q^{\nu}}(\rho)$. 
    Let \[\phi(Q)=(\phi(p_1), \dots, \phi(p_{\nu+1}))\subseteq (\phi(X))^{\nu+1}\subseteq (\F_q^{\nu})^{\nu+1}\]
     by Theorem \ref{diagBatch} there exists a query recovery set $R^{\phi(Q)}=\{R_{\phi(p_1),i_1},\dots,R_{\phi(p_{\nu+1}),i_{\nu+1}}\}$ for $\phi(Q)$ in $\calC_{\F_q^\nu}$
	with $R_{\phi(p_s),i_s}\subseteq \F_q^{\nu}$ as in Definition \ref{qRecovery} and $i_1,\dots,i_{\nu+1}\in [\nu]\cup\{0\}$.
	
	For all $s\in[\nu+1]$, $R^{\phi(X)}_{\phi(p_s),i_s}=R_{\phi(p_s),i_s}\cap \phi(X)$  matches the definition in Lemma \ref{recoverable}.
	Since $\rho+1<|A_i|$ for all $i\in[\nu]$, by that lemma the values of $f(R^{\phi(X)}_{\phi(p_\ell),i_\ell})$  are enough to recover $f(\phi(p_\ell))$, for any $\ell\in [\nu+1]$ and $f\in \F_q[x_1,\dots,x_{\nu}]^{\le\rho}$. This means that
	\[R^{\phi(Q)}=\{R^{\phi(X)}_{\phi(p_1),i_1},\dots,R^{\phi(X)}_{\phi(p_{\nu+1}),i_{\nu+1}}\}\]
	is a query recovery set for $\phi(Q)$ in $\calC_{\phi(X)}(\rho)$. Furthermore, we have that  by Condition \eqref{intersection} for $\calC_{\F_q^{\nu}}(\rho)$, it holds that
	
	\[\left|\left(\bigcup_{s=1}^{\mu+1}R^{\phi(X)}_{\phi(p_s),i_s}\right)\cap [\tilde{p}]^{\phi(X)}\right|=\left|\left(\bigcup_{s=1}^{\mu+1}R_{\phi(p_s),i_s}\right)\cap [\tilde{p}]\cap \phi(X)\right|\le1\]
	for all $\tilde{p}\in \phi(X)$ and by Condition \eqref{empty} for $\calC_{\F_q^{\nu}}(\rho)$ it holds that 
	
	\[R^{\phi(X)}_{\phi(p_r),i_r}\cap R^{\phi(X)}_{\phi(p_s),i_s}=R_{\phi(p_r),i_r}\cap R_{\phi(p_s),i_s}\cap \phi(X)=\emptyset\]
	for all $r,s\in [\nu+1]$ with $r\neq s$. 
	
	We have thus shown that the query $\phi(Q)=\{\phi(p_1),\dots, \phi(p_{\nu+1})\}\subseteq \phi(X)$ can be recovered in $\calC_{\phi(X)}(\rho)$ using the recovery sets        \begin{equation}\label{equation}
	    \{R^{\phi(X)}_{\phi(p_1),i_1},\dots,R^{\phi(X)}_{\phi(p_{\nu+1}),i_{\nu+1}}\}. 
	\end{equation}
	
	We state that the set $\{R^{X}_{p_1,i_1},\dots,R^{X}_{p_{\nu+1},i_{\nu+1}}\}$
    is a valid recovery set for $Q$ in $\calC_X(\rho)$.  Let $R^Q=\{R_{p_1,i_1},\dots,R_{p_{\nu+1},i_{\nu+1}}\}$ be the set of recovery sets in $\F_q^\mu$ where the indices correspond to the ones of Equation \eqref{equation}. Let $\overline{Q}$ be a query obtained by appending $\mu-\nu$ points of $\F_q^\mu$ to $Q$. Using Theorem  \ref{Expanding}, a query recovery set $R^{\overline{Q}}$ can be constructed recursively by starting from $R^{\phi(Q)}$ such that $R^Q\subseteq R^{\overline{Q}}$ and by taking only the recovery set for the points in $Q$ and restricting them to $X$ we obtain the set $R^Q$.

\end{proof}

\section{Conclusions}

The work in this paper focuses on the study of batch properties of affine Cartesian codes. For affine Cartesian codes, given a subspace $\calV\subseteq \F_q^\mu$, we define a bucket configuration where each bucket is a cosets of $\calV$ in the quotient space $\calV\subset\F_q^{\mu}$. We show that for such bucket configuration to define a batch code, one needs to have that the intersection $\calV\cap\langle e_i,e_j\rangle$ is trivial for all $i\neq j$. By choosing $\calV=\langle(1,\dots,1)\rangle$, we demonstrate that the affine Cartesian code $\calC_{\F_q^{\mu}}{\rho}$ can satisfy queries of length $t=\mu+1$ for any $\mu\ge 3$. We also able to extend the result for any affine Cartesian code, where the size of the query depends on the total degree and the size of the subsets defining the code.

\bibliography{bibliography}
\bibliographystyle{siam}

\end{document}